\begin{document}

\title{An Efficient Simulation of Quantum Secret Sharing%\thanks{Grants or other notes
%about the article that should go on the front page should be
%placed here. General acknowledgments should be placed at the end of the article.}
}

%\titlerunning{Short form of title}        % if too long for running head

\author{Kartick Sutradhar         \and
        Hari Om %etc.
}

%\authorrunning{Short form of author list} % if too long for running head

\institute{Kartick Sutradhar \at
              Indian Institute of Technology (ISM) Dhanbad \\
              Mobile: +91-7602621359\\
              \email{kartick.sutradhar@gmail.com}           %  \\
%             \emph{Present address:} of F. Author  %  if needed
           \and
           Hari Om \at
              Indian Institute of Technology (ISM) Dhanbad \\
              \email{hariom4india@gmail.com} 
}

\date{Received: date / Accepted: date}
% The correct dates will be entered by the editor

\maketitle

\begin{abstract}
In quantum cryptography, quantum secret sharing $(QSS)$ is a fundamental primitive. $QSS$ can be used to create complex and secure multiparty quantum protocols. Existing $QSS$ protocols are either at the $(n, n)$ threshold $2$ level or at the $(t, n)$ threshold $d$ level with a trusted player, where $n$ denotes the number of players and $t$ denotes the threshold number of players. Here, we propose a secure $d$-level $QSS$ protocol for sharing a secret with efficient simulation. This protocol is more secure, flexible, and practical as compared to the existing $QSS$ protocols: $(n, n)$ threshold $2$-level and $(t,n)$ threshold $d$-level with a trusted player. Further, it does not disclose any information about the secret to players. Its security analysis shows that the intercept-resend, intercept, entangle-measure, forgery, collision and collusion attacks are not possible in this protocol. 

\keywords{Secure Computation \and Quantum Cryptography \and Information Security \and Quantum Secret Sharing}
% \PACS{PACS code1 \and PACS code2 \and more}
% \subclass{MSC code1 \and MSC code2 \and more}
\end{abstract}
\section{Introduction}
A dealer shares a secret with $n$ players in secret sharing $(SS)$, and when the secret needs to be reconstructed, the threshold number of players can do so collaboratively. The quantum secret sharing \cite{hillery1999quantum,bao2009threshold,yang2013secret,Gang4,lu2018verifiable,lau2013quantum,Hao,mashhadi2016fairly,dehkordi2019proactive,mashhadi2017provably,mashhadi2016share,mashhadi2016analysis,mashhadi2020toward,mashhadi2020csa,mashhadi2017new,karimifard2016semiquantum,charoghchi2021three,mashhadi2020improvement,shi2010quantum,run2010efficient,shi2011multi,gyongyosi2019quantum,sutradhar2020efficient} is a  fundamental primitive protocol for sharing a secret in quantum cryptography, which may be considered as an extension of secret sharing. The $QSS$ protocol can be used to create complex multiparty quantum computing protocols that are secure. In the $(n,n)$ threshold $QSS$, a dealer shares a secret with $n$ players by dividing it into $n$ bits, known as shares, which are distributed among $n$ players, each of whom has only his share. The secret can be reconstructed by the $n$ players working together. Similarly, in the $(t,n)$ threshold $QSS$, a dealer shares a secret with $n$ players by dividing it into $n$ bits and distributing them to the $n$ players. The $t$ players will work together to solve the mystery. Because it protects the quantum threshold and secure quantum multiparty computation, the $QSS$ is commonly used in quantum threshold cryptography and secure quantum multiparty computation.\\
Here, we propose a secure $d$-level $QSS$ protocol for sharing a secret, where $t$ players can reconstruct the secret without a trusted player. In our protocol, each player knows only his share, even the reconstructor knows only his share. In this protocol, we use some basic operations i.e., protocol-I of Shi {\em et al.} \cite{shi2016secure}, $CNOT$ gate \cite{nielsen2002quantum}, secure communication \cite{Gang1,Gang3,shi2017quantum,shi2016efficient,sun2020toward,shi2018efficient,peng2018novel,zhang2018economic,luo2018novel,xu2017nearest,sutradhar2020hybrid,sutradhar2020generalized,sutradhar2021efficient}, entangle state \cite{Gang2,shi2016comment,shi2016quantum,dan2016efficient,shi2016data,shi2015quantum,shi2015comments,shi2013multi,shi2012multiparty,shi2012novel,shi2011efficient,run2011novel,shi2011multi,shi2011asymmetric}, Quantum Fourier Transform $(QFT)$ \cite{Nielsen2002} and Inverse Quantum Fourier Transform $(QFT^{-1})$ \cite{Nielsen2002}, to transform the particles. We use a quantum approach in classical secret sharing to combine the benefits of both classical and quantum secret sharing, preventing attacks such as Intercept-Resend (IR), Intercept, Entangle-Measure (EM), Forgery, Collision, and Collusion.

\section{Related Work}
There are numerous $QSS$ protocols for secret sharing in quantum cryptography \cite{Mashhadi2019,mashhadi2012novel,hillery1999quantum,bao2009threshold,mashhadi2012analysis,yang2013secret,Gang4,lu2018verifiable,lau2013quantum,Hao,dehkordi2008new,dehkordi2008efficient,mashhadi2015two,dehkordi2008verifiable,mashhadi2017secure,mashhadi2015computationally,mashhadi2013novel}. In 1999, Hillery {\em et al.} discussed the first $QSS$ protocol \cite{hillery1999quantum} based on the Greenberger-Home-Zeilinger $(GHZ)$ state. In 2009, Li {\em et al.} introduced a $QSS$ protocol \cite{bao2009threshold} of secure direct communication. This protocol is $(t, n)$ threshold scheme but $2$ level. In 2013, Yang {\em et al.} introduced a $QSS$ protocol \cite{yang2013secret} based on the $QFT$. This protocol is $d$-level $(t, n)$ threshold scheme but it is not secure because each player broadcasts the results of the measurement at the last step. Because the measurement results contain information about the secret, if an attacker intercepts the measurement results, he may expose the secret or execute an intercept-resend attack. In 2015, Qin {\em et al.} discussed a $QSS$ protocol \cite{qin2015t} based on the phase shift operation, which is $2$-level $(t, n)$ threshold scheme. The protocols \cite{bao2009threshold} and \cite{qin2015t} are not secure because the unitary operation transforms the private information of player $P_{e-1}$ and then the transformed information is transmitted to player $P_e$. So, the players $P_{e-1}$ and $P_{e+1}$ collaboratively can retrieve the private information of  player $P_e$. In 2017, Song {\em et al.} discussed a $(t,n)$ threshold $d$-level $QSS$ protocol \cite{song2017t} based on some basic operations, i.e., $d$-level $CNOT$ gate, $QFT$, generalized Pauli operator, and $QFT^{-1}$. In that protocol, Alice (dealer) selects $Bob_1$ as a trusted reconstructor from the set of participants $\mathbb{B} = \{Bob_1, Bob_2, \dots, Bob_n \}$ and then selects a hash function $SHA1$ \cite{eastlake2001us} to compute the hash value of the secret (which is to be shared) and sends this hash value to the trusted reconstructor $Bob_1$. Here, $Bob_1$ can perform collision attack to reveal the secret. So, the security of this protocol is dependent on the trusted reconstructor $Bob_1$. The main problem of Song {\em et al.'s} protocol is that the reconstructor $Bob_1$ cannot recover the original secret because $QFT^{-1}$ cannot be summed up over all the states \cite{CommentKao2018}. In other words, the reconstructor $Bob_1$ needs the secret information of other players to reconstruct the original secret. In 2018, Qin {\em et al.} \cite{Qin2018Multidimensional} discussed a $QSS$ protocol which can efficiently share a secret by using the $QFT$ and Pauli operator, but it is a $(n, n)$ threshold scheme. 

In our protocol, any $t$ players can reconstruct the secret without a trusted player and  each player knows only his share, nothing else. Furthermore, the reconstructor is unable to perform the collision attack because the secret's hash value is shared among the players.
\section{Preliminaries}
The $QFT$, $QFT-1$, Control-NOT $(CNOT)$ gate, and Shamir's Secret Sharing, which will be used in the proposed $QSS$ protocol, are all introduced here.
\subsection{Quantum Fourier Transform}
The $QFT$ \cite{Nielsen2002}, a unitary transform, is based on the quantum phenomenon and expansion of the standard discrete Fourier transform. For $s \in \{0, 1, \dots d-1 \}$, the $QFT$ of $d$-level quantum system is defined as follows:
\begin{equation}\label{equ1}
QFT: \ket{s}  \rightarrow \frac{1} {\sqrt{d}} \sum_{q=0}^{d-1} e^{2\pi i\frac{s}{d}q} \ket{q}.
\end{equation}
The $QFT^{-1}$ is defined by
\begin{equation}\label{equ2}
QFT^{-1}: \ket{q}  \rightarrow \frac{1} {\sqrt{d}} \sum_{s=0}^{d-1} e^{-2\pi i\frac{q}{d}s} \ket{s}.
\end{equation}
Further,
\begin{equation}\label{equ3}
\sum_{q=0}^{d-1} e^{2\pi i\frac{s}{d}q} = \begin{cases} 0~\text{if}~s \ne 0~mod~d\\ d~\text{if}~s = 0~mod~d \end{cases}
\end{equation}
So,
\begin{equation}\label{equ4}
\begin{split}
QFT^{-1}\Bigg( \frac{1} {\sqrt{d}} \sum_{q=0}^{d-1} e^{2\pi i\frac{s}{d}q} \ket{q} \Bigg) & = \frac{1} {\sqrt{d}} \sum_{q=0}^{d-1} e^{2\pi i\frac{s}{d}q} QFT^{-1} \ket{q} \\
& = \frac{1} {d} \sum_{q=0}^{d-1} \ket{s} + \frac{1} {d} \sum_{k=0\wedge k\ne s}^{d-1} 0.\ket{k} = \ket{s}
\end{split}
\end{equation}
That is,
\begin{equation}\label{equ5}
QFT^{-1}(QFT\ket{s})=\ket{s}.
\end{equation}
\subsection{Control-NOT $(CNOT)$ gate}
The $CNOT$ gate \cite{nielsen2002quantum} is a two-qubit gate, one is control qubit and other is target qubit. If the control bit of $CNOT$ gate is set to $\ket{0}$, then the $NOT$ gate would not be applied to the target bit. If the control bit of the $CNOT$ gate is set to $\ket{1}$, then the $NOT$ gate would be applied to the target bit.
\subsection{Shamir's Secret Sharing}
In the Shamir's secret sharing \cite{shamir1979share}, there are a dealer $\mathbb{D}$ and $n$ players $\mathcal{P} = \{P_1, P_2, \dots P_n\}$. The Shamir's secret sharing consists of two phases:
\subsubsection{Secret Sharing Phase}
In this phase, the dealer selects a polynomial $f(x)=S+a_1x+a_2x^2+\dots+a_{t-1}x^{t-1}$ of degree $(t-1)$, where $S$ is a secret and $a_1, a_2, \dots, a_{t-1} $ are coefficients of the polynomial $f(x)$. The dealer computes $n$ shares and distributes them among $n$ players, each player $P_i$ only knows $f(x_i)$, where $i=1, 2, \dots, n$. 
\subsubsection{Secret Reconstruction Phase}
Using $t$ shares of the secret and the Lagrange interpolation formula, $t$ players will jointly reconstruct the secret in this phase.
\begin{equation}\label{equ6}
f(x) = \sum_{r=1}^{t} f(x_r) \prod_{1 \le j \le t, j \neq r} \frac{x - x_j}{x_r - x_j}
\end{equation}
To calculate the polynomial at $x=0$,  Eq.(\ref{equ6}) can be simplified as
\begin{equation}\label{equ7}
\begin{split}
f(0) &= \sum_{r=1}^{t} f(x_r) \prod_{1 \le j \le t, j \neq r} \frac{x_j}{x_j - x_r}
\end{split}
\end{equation}
\section{Proposed Method}
We present a $d$-level $QSS$ protocol for sharing a secret that allows $t$ players to reconstruct the secret without the help of a trusted player. In comparison to the existing $QSS$ protocols, such as the $(n,n)$ threshold $2$-level and $(t,n)$ threshold $d$-level, which both require a trusted player, this protocol is more secure, versatile, and practical.Furthermore, no information about the secret is revealed to any of the players. There are two stages to the $QSS$ protocol: secret sharing and secret reconstruction.
\subsection{Secret Sharing Phase}
In this phase, the dealer $\mathbb{D}$ shares the secret among players $\mathcal{P} = \{P_1, P_2, \dots P_n\}$. Initially, the dealer $\mathbb{D}$ selects a prime $d$ such that $2 \le d \le 2n$ and sets a finite field $Z_d$. Then, the dealer $\mathbb{D}$ selects a polynomial $f(x)=S + a_1x + a_2x^2 + \dots + a_{t-1}x^{t-1}$ of degree of $(t-1)$, where $S$ is secret, $a_1, a_2, \dots, a_{t-1}$ are coefficients of polynomial $f(x) \in Z_d$ and the symbol $'+'$ is defined as addition modulo $d$. The dealer computes the classical shares $f(x i)$ and uses the BB84 protocol to encode these classical shares $f(x i)$ in a qubit string \cite{bennett1984update}. The qubit string of $f(x_i)$ is distributed among $n$ players, player $P_i$ only knows the share $f(x_i)$. In addition, the dealer $\mathbb{D}$ selects the $SHA1$ hash function to compute the hash value $\mathcal{H}(S)$ \cite{eastlake2001us} and shares it among $n$ players using the polynomial $g(x)=\mathcal{H}(S) + b_1x + b_2x^2 + \dots + b_{t-1}x^{t-1}$. Player $P_i$ only knows the share $g(x_i)$, where $i= 1, 2, \dots, n$.
\subsection{Secret Reconstruction Phase}
Suppose $\mathcal{Q}=\{P_1, P_2 \dots P_t\}$ is a qualified subset from all the qualified subsets, where the number of players in each qualified subset is $t$. The dealer $\mathbb{D}$ selects a player from the qualified subset $\mathcal{Q}=\{P_1, P_2 \dots P_t\}$ as a reconstructor. Here, the dealer $\mathbb{D}$ selects player $P_1$ from the qualified subset $\mathcal{Q}=\{P_1, P_2 \dots P_t\}$ as a reconstructor. The reconstructor $P_1$ only knows his share, nothing else. This reconstructor $P_1$ reconstructs the secret and hash value. The process of reconstruction is given as follows:\\
\textbf{Step 1:} Player $P_r$,   $r= 1, 2, \dots, t$, calculates the shadow $(s_r)$ of the share as follows.
\begin{equation}\label{equ8}
s_r  = f(x_r) \prod_{1\leq j\leq t, j\neq r} \frac {x_j} {x_j - x_r} \mod d
\end{equation}
\textbf{Step 2:} Player $P_1$ (reconstructor) makes basis state $\ket{s_1}_H$, where size of the basis state is $c$-qubit, $s_1$ is his private shadow of the share and $c = \lceil \log_{2}^{d} \rceil$. Then, player $P_1$ applies $QFT$ on the state $\ket{s_1}_H$ and the resultant state $\ket{\varphi_1}$ is calculated as follows:
\begin{equation}\label{equ9}
\begin{split}
\ket{\varphi_1} & = (QFT\ket{s_1}_H)\\
& =\frac{1} {\sqrt{d}} \sum_{k=0}^{d-1} e^{2\pi i\frac{s_1}{d}k} \ket{k}_H
 \end{split}
\end{equation}
\textbf{Step 3:} Player $P_1$ again makes ancillary state $\ket{0}_T$, where size of the ancillary state is $c$-qubit and $c = \lceil \log_{2}^{d} \rceil$, and then executes  $CNOT^{\otimes c}$ operations on the combined state $\ket{\varphi_1} \ket{0}_T$, where the first $c$-qubits is control qubit and second $c$-qubits is target qubit. After performing $CNOT^{\otimes c}$ operations, the state $\ket{\varphi_1}$ evolves as an entangled state $\ket{\varphi_2}$, where subscript $H$ or $T$ represents home state (non-transmitted state) or transmitted state.
\begin{equation}\label{equ10}
\begin{split}
\ket{\varphi_2} &= CNOT^{\otimes c} \ket{\varphi_1} \ket{0}_T\\
&=\frac{1} {\sqrt{d}} \sum_{k=0}^{d-1} e^{2\pi i\frac{s_1}{d}k} \ket{k}_H \ket{k}_T
 \end{split}
\end{equation}
\textbf{Step 4:} Player $P_1$ communicates with player $P_2$ using the authenticated quantum channel to send the ancillary state $\ket{k}_T$ (i.e., second $c$-qubits).\\
\\
\textbf{Step 5:} Player $P_2$ applies an oracle operator $C_k$ on $\ket{k}_T \ket{s_2}$, where $C_k$ is given by
\begin{equation}\label{equ11}
C_k: \ket{k}_T \ket{s_2} \rightarrow \ket{k}_T U^k \ket{s_2}
\end{equation}
with 
\begin{equation}\label{equ12}
U \ket{s_2} = e^{2\pi i\frac{s_2}{d}} \ket{s_2}
\end{equation}
where, $\ket{s_2}$ is an eigenvector of $U$ with eigenvalue $e^{2\pi i\frac{s_2}{d}}$. The combined quantum system of $P_1$ and $P_2$ is shown as follows.
\begin{equation}\label{equ13}
\begin{split}
 \ket{\varphi_3} &= C_k \frac{1} {\sqrt{d}} \sum_{k=0}^{d-1} e^{2\pi i\frac{s_1}{d}k} \ket{k}_H \ket{k}_T \ket{s_2}\\
 &= \frac{1} {\sqrt{d}} \sum_{k=0}^{d-1} e^{2\pi i\frac{s_1 + s_2}{d}k} \ket{k}_H \ket{k}_T \ket{s_2}
\end{split}
\end{equation}
\textbf{Step 6:} Player $P_2$ communicates with player $P_3$ through an authenticated quantum channel to send the ancillary state $\ket{k}_T$ and keeps $\ket{s_2}$ as secret. Player $P_3$ performs  $t-1$ times similar process as done by $P_2$. If $t$ players honestly perform the protocol, then the combined quantum state is obtained as shown below.
\begin{equation}\label{equ14}
 \ket{\varphi_4} = \frac{1} {\sqrt{d}} \sum_{k=0}^{d-1} e^{2\pi i\big(\frac{\sum_{r=1}^{t}s_r}{d}\big)k} \ket{k}_H \ket{k}_T \ket{s_2} \dots \ket{s_t}.
\end{equation}\\
\textbf{Step 7:} The ancillary state $\ket {k}_T$ is sent by $P_t$ back to $P_1$ through  an authenticated quantum channel. Player $P_1$ again performs $CNOT^{\otimes c}$ operation on his $2c$ qubits, where the first $c$-qubits is control qubit and second $c$-qubits is target qubit. The output state is shown as below.
\begin{equation}\label{equ14}
\begin{split}
\ket{\varphi_5} &= CNOT^{\otimes c} \ket{\varphi_4} =\frac{1} {\sqrt{d}} \sum_{k=0}^{d-1} e^{2\pi i\big(\frac{\sum_{r=1}^{t}s_r}{d}\big)k} \ket{k}_H \ket{0}_T\ket{s_2} \dots \ket{s_t}
 \end{split}
\end{equation}
\textbf{Step 8:} The second $c$-qubits (i.e., ancillary state $\ket{0}_T$) is measured by player $P_1$ in computational basis. If the output of the measurement is $\ket{0}$, then player $P_1$ continues the process; otherwise, he believes that the protocol executes with at least one corrupted player and ends the protocol.\\
\\
\textbf{Step 9:} Player $P_1$ applies $QFT^{-1}$ on the first $c$-qubits and measures the output to get the secret $f(0)'=\sum_{r=1}^{t}~s_r~mod~d$.\\
\\
\textbf{Step 10:} Finally,  $t$ players perform all the above nine steps again to get the hash value of the secret and player $P_1$ gets the hash value of the secret $g(0)'=\sum_{r=1}^{t}~h_r~mod~d$, where $h_r$ is the shadow of hash value shares. Player $P_1$ uses the hash function $SHA1$ to compute the hash value $\mathcal{H}(f(0)')$ and compares it with the hash value $g(0)'$. If $(\mathcal{H}(f(0)')=g(0)')$, then player $P_1$ realizes that all $t$ players have performed the reconstruction phase honestly; otherwise,  player $P_1$ believes that there is at least one corrupted player.
\section{Correctness Proof of $(t, n)$ threshold $d$-level $QSS$}
Here, we prove the correctness of the proposed $(t, n)$ threshold $d$-level $QSS$. We mainly focus on the correctness proof of Step $9$ of secret reconstruction phase.
\begin{lemma}
If $QFT^{-1}$ (as given in Equation~\ref{equ2}) is applied to the first $c$-qubits, then the measurement of the output is secret $(f(0)')$.
\end{lemma}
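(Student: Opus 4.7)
The plan is to trace through Step 9 explicitly, using the QFT identity already established in Equation~\ref{equ4} together with the Lagrange interpolation formula of Equation~\ref{equ7}. After Step 8, once the measurement of the $c$-qubit ancillary register has returned $\ket{0}_T$, the remaining home register held by $P_1$ (with the ancillas and the private shadows of the other players factored out) is exactly
\begin{equation*}
\ket{\psi} = \frac{1}{\sqrt{d}} \sum_{k=0}^{d-1} e^{2\pi i\bigl(\frac{\sum_{r=1}^{t} s_r}{d}\bigr) k} \ket{k}_H.
\end{equation*}
So the first task is simply to identify this as a $QFT$-encoded basis state whose ``label'' is $S' := \sum_{r=1}^{t} s_r \bmod d$.

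Next, I would apply $QFT^{-1}$ to $\ket{\psi}$ and invoke Equation~\ref{equ4} verbatim, with $s$ replaced by $S'$. That identity says that $QFT^{-1}$ acting on $\frac{1}{\sqrt{d}} \sum_{k} e^{2\pi i \frac{S'}{d} k} \ket{k}$ returns $\ket{S'}$, because the orthogonality sum in Equation~\ref{equ3} kills every computational basis term except $\ket{S'}$. Measuring in the computational basis therefore yields the classical outcome $S' = \sum_{r=1}^{t} s_r \bmod d$ with probability one; this is the quantum half of the argument and follows immediately from the preliminaries.

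The remaining piece is to show that $S'$ is in fact $f(0)$. Substituting the definition of the shadow from Equation~\ref{equ8},
\begin{equation*}
\sum_{r=1}^{t} s_r \bmod d = \sum_{r=1}^{t} f(x_r) \prod_{1 \le j \le t,\; j \neq r} \frac{x_j}{x_j - x_r} \bmod d,
\end{equation*}
which is precisely the right-hand side of the Lagrange interpolation formula of Equation~\ref{equ7} evaluated at $x = 0$. Since $f$ has degree $t-1$ and $t$ shares are used, Lagrange interpolation in $Z_d$ recovers $f(0) = S$ exactly, so $S' = f(0)'$ as claimed.

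The only delicate point — and the one I expect to be the main obstacle in making the proof airtight — is justifying the factorization of the state before $QFT^{-1}$: one must argue that the private kets $\ket{s_2}\cdots\ket{s_t}$ of the other players and the ancillary $\ket{0}_T$ genuinely decouple from the home register, so that applying $QFT^{-1}$ only to $P_1$'s $c$ qubits acts effectively on the displayed state above rather than on some entangled residual. This is handled by observing that the oracle $C_k$ of Equations~\ref{equ11}--\ref{equ12} leaves each $\ket{s_r}$ as an eigenvector and only deposits a phase that depends on $k$, and that the second round of $CNOT^{\otimes c}$ in Step 7 disentangles the $T$ register from the $H$ register (which is why the Step 8 measurement outcome $\ket{0}_T$ is deterministic in the honest run). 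Once that decoupling is made explicit, the two displayed computations above close the proof.
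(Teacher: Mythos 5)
Your proposal is correct and follows essentially the same route as the paper's own proof: identify $\sum_{r=1}^{t} s_r \bmod d$ with $f(0)'$ via the Lagrange interpolation formula of Equation~\ref{equ7}, then apply $QFT^{-1}$ and the orthogonality relation of Equation~\ref{equ3} to collapse the superposition onto $\ket{f(0)'}_H$. Your added remark on why the $T$ register and the kets $\ket{s_2}\cdots\ket{s_t}$ decouple before $QFT^{-1}$ is applied is a point the paper leaves implicit, but it does not change the argument.
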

 \begin{proof}
Applying $QFT^{-1}$ to the first $c$-qubits provides the process of secret recovery as given below:\\
The original secret $f(0)'$ can be calculated using the Lagrange interpolation and Equation~\ref{equ7} as follows.
\begin{equation}\label{equ15}
\begin{split}
f(0)' & = f(x_1) \prod_{1\leq j\leq t, j\neq 1} \frac {x_j} {x_j - x_1} + \dots + f(x_t) \prod_{1\leq j\leq t, j\neq t} \frac {x_j} {x_j - x_t} \mod d\\
& = (s_1+\dots+s_t) \mod d\\
& = (\sum_{r=1}^{t} s_r \mod d)
 \end{split}
\end{equation}
Player $P_1$ applies $QFT^{-1}$ to the first $c$-qubits.
\begin{equation}\label{equ16}
\begin{split}
QFT^{-1} \Bigg( \frac{1} {\sqrt{d}} \sum_{k=0}^{d-1} e^{2\pi i\big(\frac{\sum_{r=1}^{t}s_r}{d}\big)k} \ket{k}_H \Bigg) &= \frac{1} {\sqrt{d}} \sum_{k=0}^{d-1} e^{2\pi i\big(\frac{\sum_{r=1}^{t}s_r}{d}\big)k} QFT^{-1} \ket{k}_H\\
&=\Bigg | \sum_{r=1}^{t} s_r~mod~d \Bigg >_H + \frac{1} {d} \sum_{l=0}^{d-1}  0 . \ket{l}_H\\
&=\Bigg | \sum_{r=1}^{t} s_r~mod~d \Bigg >_H = \ket{f(0)'}_H
 \end{split}
\end{equation}
Therefore, if this protocol honestly is executed by $t$ players, the reconstructor $P_1$ will get the original secret.
\end{proof}
\section{Simulation Results}
In this protocol, the initiator $P_1$ applies  $QFT$  on the $q$-qubit state and executes $CNOT$ gate. Then, the ancillary qubit is sent to player $P_2$, who applies the oracle operator on the ancillary qubit. Thereafter,  player $P_2$ sends the ancillary qubit to $P_3$ and player $P_3$ performs similar process. This process is performed $(t-1)$ times. After that, the ancillary qubit is sent back to player $P_1$ by $P_t$. The player $P_1$ performs $CNOT$ and $QFT^{-1}$ to get the multiplication. In the secure multiparty quantum  multiplication, the Hadamard gate is taken to be the $QFT$. After this, the initiator $P_1$ performs $QFT$ on the $c$-qubit state and executes  $CNOT$ gate. Then, $\ket{k}_T$ is sent to player $P_2$ who applies the oracle operator on $\ket{k}_T$. Thereafter, player $P_2$ sends $\ket{k}_T$ to $P_3$, who performs similar process. This process is performed $(t-1)$ times. After that, the ancillary state $\ket{k}_T$ is sent back to player $P_1$ by $P_t$. Player $P_1$ performs $CNOT$ and $QFT^{-1}$ to get the multiplication. We have executed this quantum protocol for $(t,n)$ threshold secure multiparty multiplication using the following number of players and qubits: 
\begin{itemize}
 \item In simulations $1-3$, we have considered three players with one qubit, three players with two qubits, and three players with three qubits, respectively, and got efficient result after taking $8192$ number of average shots.
 \item In simulations $4-6$, we have considered four players with one qubit, four players with two qubits, and four players with three qubits, respectively, and got efficient result after taking $8192$ number of average shots.
 \item In simulations $7-9$, we have considered fifteen players with one qubit, fifteen players with two qubits, and fifteen players with three qubits, respectively, and got efficient result after taking $8192$ number of average shots.
\end{itemize}
We got efficient results of multiplication after taking $8192$ number of average shots.
\section{Results and Discussion}
In this section, we discuss the security and performance analysis of the proposed $(t,n)$ threshold $QSS$ protocol based on some properties.
\subsection{Security Analysis}
Here, we analyze the outside (i.e., outside eavesdropper wants to steal the private information of all players) and participant (i.e., attack from one or more dishonest players) attacks. We discuss four types of outside attacks (i.e., Intercept-Resend $(IR)$, Intercept, Entangle-Measure $(EM)$ and Forgery) and two types of participant attacks (i.e., Collision and Collusion) \cite{cai2019cryptanalysis,ting2009participant,Wang2008,wang2011security,wang2013cryptanalysis,wang2017security} .
\subsubsection{Outside Attack}
In this type of attack, an outside eavesdropper wants to steal the private information of all players. We discuss the Intercept-Resend $(IR)$, Intercept, Entangle-Measure $(EM)$ and Forgery attacks as follows.
\paragraph{Intercept-Resend $(IR)$ Attack:}
In intercept-resend attack, a player measures the quantum state, which is sent by another player and replaces this state with his own state and then sends the replacement state to  other players. In our proposed protocol,  player $P_1$ sends the ancillary state $\ket{k}_T$ to  dishonest player $P_2$ through an authenticated quantum channel and player $P_2$ wants to eavesdrop $P_1$'s shadow of the share $s_1$. If the ancillary state measured by  dishonest player $P_2$ in computational basis ${\ket{0}, \ket{1}, \dots, \ket{d-1}}$. The dishonest player $P_2$ can succeed to get  $\ket{l}_T$ with the probability of $1/d$, but the output of the measurement $k$ is totally independent of $P_1's$ share $s_1$. Further,  player $P_2$ sends the state $\ket{k}_T$ to  player $P_3$. Unfortunately,   $k$ does not possess any partial information about   $P_1$'s shadow of the share $s_1$. The dishonest player $P_2$ cannot get any information from the intercepted state, and similarly the dishonest player $P_3$ cannot get any information from the transmitted state $\ket{k}_T$. So, the intercept-resend attack is infeasible.
\paragraph{Intercept Attack:}
In this attack, the dishonest player $P_2$ wants to eavesdrop $P_1$'s shadow of the share $s_1$. The dishonest player $P_2$ can measure the output of the unitary operator (transformed state) because,  based on $QFT$, player $P_2$ knows that  player $P_1$'s shadow of the share state $\ket{s_1}$ has evolved as the ancillary state $\ket{k}_T$. So,  $QFT^{-1}$ can be performed on the ancillary state $\ket{k}_T$ by dishonest player $P_2$ to reveal  $s_1$. If the ancillary state measured by dishonest player $P_2$ in computational basis ${\ket{0}, \ket{1}, \dots, \ket{d-1}}$, then $P_2$ can succeed to get $\ket{l}_T$ with the probability of $1/d$, but $P_2$ cannot get $P_1$'s shadow of the share, because the global information cannot be extracted from the limited number of qubits. The entangled systems cannot be disentangled by the limited number of qubits. So, the attacker cannot get any information about $P_1$'s shadow of the share.
\paragraph{Entangle-Measure $(EM)$ Attack:}
The dishonest player $P_2$ performs a more complicated entangle-measure attack. Player $P_2$ prepares an ancillary state $\ket{0}_{P_2}$ that gets entangled with the transmitted state $\ket{k}_T$ using the local unitary operations. Then, player $P_2$ measures the entangle state to get the partial information about player $P_1$'s shadow of the share. After successful completion of honesty test, it can easily be deduced that $\eta_k=1$. After performing $\bar{U}_{TP_2}$, 
 $P_2$ sends $\ket{k}_T$ back to $P_1$ and measures the ancillary system after execution of $CNOT^{\otimes c}$ operation by player $P_1$. If  player $P_2$ measures the ancillary state $\ket{\phi(k)}_{P_2}$, $P_2$ cannot get any information about $P_1$'s shadow of the share $s_1$ because of entanglement of $\ket{k}_H$ and $\ket{\phi(k)}_{P_2}$. So, this attack is also infeasible.
\paragraph{Forgery Attack:}
In forgery attack, the participants can execute the protocol with the fake shares. The proposed $QSS$ protocol can prevent the forgery attack, which is one of the important issues, where the participants can provide the fake shares. If any dishonest player performs the Pauli operator with the fake shadow, the original secret cannot be reconstructed correctly. In the proposed protocol,  player $P_1$ uses the hash function $SHA1$ to compute the hash value $H(f(0)')$ and compares it with the hash value $g(0)'$. If $(H(f(0)')=g(0)')$, then $P_1$ shares the secret with other $t-1$ players; otherwise, $P_1$ realizes that at least one player performs the reconstruction phase dishonestly and terminates the reconstruction phase. So, the forgery attack is not possible in our quantum ($t,n$) threshold $QSS$ protocol.
\subsubsection{Participant Attack}
This type of attack is performed by one or more dishonest players to reveal the secret information.
\paragraph{Collision (attack from one) Attack:}
In collision attack, the attacker performs an attack on the hash function, where the hash function produces the same hash value for two different inputs. Many existing $QSS$ protocols cannot prevent the collision attack. In \cite{song2017t}, Alice (dealer) selects $Bob_1$ as a trusted reconstructor from the set of participants $\mathbb{B} = \{Bob_1, Bob_2, \dots, Bob_n \}$ and then selects a hash function $SHA1$ to compute the hash value of the secret (which is to be shared). Then, Alice sends this hash value to the trusted reconstructor $Bob_1$. At this point, $Bob_1$ can perform collision attack to reveal the secret. So, the security of their protocol is dependent on the trusted reconstructor $Bob_1$.  In our protocol, the dealer $\mathbb{D}$ computes the hash value $H(S)$ using the $SHA1$ hash function and shares it among  $n$ players. Therefore, the reconstructor $P_1$ does not have any information about the hash value and he cannot perform the collision attack.
\paragraph{Collusion (attack from more than one dishonest players) Attack:}
In collusion attack, some players can collude together to get the shadow of the share of other player. In order to get the private information of $P_e$ ,  players $P_{e-1}$ and $P_{e+1}$ perform the protocol dishonestly. In our proposed protocol,  players $P_{e-1}$ and $P_{e+1}$ cannot perform the collusion attack because the unitary operation is performed by each participant with his private information. Moreover, this private information is not transmitted through a quantum channel.
\subsection{Performance Analysis}
We analyze and compare the performance of the proposed $QSS$ protocol with the existing $QSS$ protocols, i.e., Li {\em et al.'s} $QSS$ \cite{bao2009threshold}, Yang {\em et al.'s} $QSS$ \cite{yang2013secret}, Qin {\em et al.'s} $QSS$ \cite{qin2015t}, Song {\em et al.'s} $QSS$ \cite{song2017t}, and Qin {\em et al.'s} $QSS$ \cite{Qin2018Multidimensional} in terms of three parameters: universality, cost, and attack. Li {\em et al.'s} $QSS$  protocol \cite{bao2009threshold} is ($t,n$) threshold scheme, but it is not for $d$-level particle. Yang {\em et al.'s} $QSS$ protocol \cite{yang2013secret} is for $d$-level particle, but this protocol is ($n,n$) threshold scheme. Qin {\em et al.'s} $QSS$ protocol \cite{qin2015t} is ($t,n$) threshold scheme, but it is not for $d$-level particle. Song {\em et al.'s} $QSS$ protocol \cite{song2017t} is for $d$-level particle, and also it is ($t,n$) threshold scheme. Song {\em et al.'s} $QSS$ protocol \cite{song2017t} can prevent the $IR$, $EM$, forgery attacks, but it cannot prevent the collision attack. Qin {\em et al.'s} $QSS$ protocol \cite{Qin2018Multidimensional} is for $d$-level particle and $cn$ qubits, but this protocol is ($n,n$) threshold scheme. Here, we have compared the proposed $QSS$ protocol with six  existing protocols in terms of the communication cost and computation cost.The communication cost can be computed based on the transmitted particles, i.e., message particles and decoy particles. The computation cost can be computed based on five parameters: $QFT$, $U$ operation, $QFT^{-1}$, measure operation, and hash operation. The Li {\em et al.'s} $QSS$ protocol \cite{bao2009threshold} needs to perform $t(2t-1)$ number of $U$ operations, $t$ number of measure operations, needs to transmit $t(t+1)$ number of messages, and $z(t+1)$ number of decoy particles, where $z$ is the number of decoy particles. The Yang {\em et al.'s} protocol \cite{yang2013secret} needs to perform $n$ number of $QFT$, $n$ number of $U$ operations, $n$ number of measure operations, and needs to transmit $(n-1)$ number of message particles. The Qin {\em et al.'s} protocol \cite{qin2015t} needs to perform $t(t+1)$ number of $U$ operations, needs to transmit $t(t+1)$ number of messages, and $z(t+1)$ number of decoy particles. The Song {\em et al.'s} protocol \cite{song2017t} needs to perform $1$ number of $QFT$, $t$ number of $U$ operations, $1$ number of $QFT^{-1}$, $1$ number of measure operations, $2$ number of hash operations and needs to transmit $(t-1)$ number of message particles. The Qin {\em et al.'s} protocol \cite{Qin2018Multidimensional} needs to perform $1$ number of $QFT$, $t(t+1)+n$ number of $U$ operations, $1$ number of $QFT^{-1}$, $n$ number of measure operations, and needs to transmit $n$ number of message particles as well as $z(t+1)$ number of decoy particles. Our proposed protocol needs to perform $1$ number of $QFT$, $(t-1)$ number of $U$ operations, $1$ number of $QFT^{-1}$, $2$ number of hash operations, and needs to transmit $t$ number of decoy particles. So, the complexity of our proposed protocol is less as compared to the existing $QSS$ protocols.

\section{Conclusion}
In this paper, we discussed a secret-sharing protocol in which $t$ players can reconstruct the secret without the help of a trusted player. In comparison to existing $QSS$ protocols, our protocol is more secure, versatile, and practical. The reconstructor $P 1$ only knows his share and nothing else; even the secret's hash value is unknown to him. Since  the reconstructor $P_1$ only knows his share, it cannot perform the collision attack.

%\begin{acknowledgements}
%If you'd like to thank anyone, place your comments here
%and remove the percent signs.
%\end{acknowledgements}

% Authors must disclose all relationships or interests that 
% could have direct or potential influence or impart bias on 
% the work: 
%
% \section*{Conflict of interest}
%
% The authors declare that they have no conflict of interest.

% BibTeX users please use one of
%\bibliographystyle{spbasic}      % basic style, author-year citations
%\bibliographystyle{spmpsci}      % mathematics and physical sciences
%\bibliographystyle{spphys}       % APS-like style for physics
%\bibliography{}   % name your BibTeX data base

% Non-BibTeX users please use
\section*{Ethical Statement}
This article does not contain any studies with human or animal subjects performed by the any of the authors. The manuscript has been prepared following the instructions provided in the Authors Guidelines of the journal.
\section*{Conflict of Interest}
The authors declare that they have no conflict of interest.
%\bibliographystyle{spmpsci}      % basic style, author-year citations
%\bibliography{biblio}   % name your BibTeX data base

\begin{thebibliography}{10}
\providecommand{\url}[1]{{#1}}
\providecommand{\urlprefix}{URL }
\expandafter\ifx\csname urlstyle\endcsname\relax
  \providecommand{\doi}[1]{DOI~\discretionary{}{}{}#1}\else
  \providecommand{\doi}{DOI~\discretionary{}{}{}\begingroup
  \urlstyle{rm}\Url}\fi

\bibitem{bao2009threshold}
Bao-Kui, L., Yu-Guang, Y., Qiao-Yan, W.: Threshold quantum secret sharing of
  secure direct communication.
\newblock Chinese Physics Letters \textbf{26}(1), 010302 (2009)

\bibitem{bennett1984update}
Bennett, C.H., Brassard, G.: An update on quantum cryptography.
\newblock In: Workshop on the Theory and Application of Cryptographic
  Techniques, pp. 475--480. Springer (1984)

\bibitem{cai2019cryptanalysis}
Cai, X.Q., Wang, T.Y., Wei, C.Y., Gao, F.: Cryptanalysis of multiparty quantum
  digital signatures.
\newblock Quantum Information Processing \textbf{18}(8), 252 (2019)

\bibitem{charoghchi2021three}
Charoghchi, S., Mashhadi, S.: Three (t, n)-secret image sharing schemes based
  on homogeneous linear recursion.
\newblock Information Sciences \textbf{552}, 220--243 (2021)

\bibitem{Gang4}
Chen, X.B., Sun, Y.R., Xu, G., Yang, Y.X.: Quantum homomorphic encryption
  scheme with flexible number of evaluator based on (k, n)-threshold quantum
  state sharing☆.
\newblock Information Sciences  (2019)

\bibitem{Gang3}
Chen, X.B., Wang, Y.L., Xu, G., Yang, Y.X.: Quantum network communication with
  a novel discrete-time quantum walk.
\newblock IEEE Access \textbf{7}, 13634--13642 (2019)

\bibitem{dan2016efficient}
Dan, L., SHI, R.h., ZHANG, S., ZHONG, H.: Efficient anonymous roaming
  authentication scheme using certificateless aggregate signature in wireless
  network.
\newblock Journal on Communications \textbf{37}(7), 182 (2016)

\bibitem{dehkordi2008efficient}
Dehkordi, M.H., Mashhadi, S.: An efficient threshold verifiable multi-secret
  sharing.
\newblock Computer Standards \& Interfaces \textbf{30}(3), 187--190 (2008)

\bibitem{dehkordi2008new}
Dehkordi, M.H., Mashhadi, S.: New efficient and practical verifiable
  multi-secret sharing schemes.
\newblock Information Sciences \textbf{178}(9), 2262--2274 (2008)

\bibitem{dehkordi2008verifiable}
Dehkordi, M.H., Mashhadi, S.: Verifiable secret sharing schemes based on
  non-homogeneous linear recursions and elliptic curves.
\newblock Computer Communications \textbf{31}(9), 1777--1784 (2008)

\bibitem{dehkordi2019proactive}
Dehkordi, M.H., Mashhadi, S., Oraei, H.: A proactive multi stage secret sharing
  scheme for any given access structure.
\newblock Wireless Personal Communications \textbf{104}(1), 491--503 (2019)

\bibitem{eastlake2001us}
Eastlake, D., Jones, P.: Us secure hash algorithm 1 (sha1) (2001)

\bibitem{gyongyosi2019quantum}
Gyongyosi, L., Imre, S.: Quantum circuit design for objective function
  maximization in gate-model quantum computers.
\newblock Quantum Information Processing \textbf{18}(7), 1--33 (2019)

\bibitem{Hao}
Hao, C., Wenping, M.: (t, n) threshold quantum state sharing scheme based on
  linear equations and unitary operation.
\newblock IEEE Photonics Journal \textbf{9}(1), 1--7 (2017)

\bibitem{hillery1999quantum}
Hillery, M., Bu{\v{z}}ek, V., Berthiaume, A.: Quantum secret sharing.
\newblock Physical Review A \textbf{59}(3), 1829 (1999)

\bibitem{CommentKao2018}
Kao, S.H., Hwang, T.: Comment on (t, n) threshold d-level quantum secret
  sharing.
\newblock arXiv preprint arXiv:1803.00216  (2018)

\bibitem{karimifard2016semiquantum}
Karimifard, Z., Mashhadi, S., EBRAHIMI, B.D.: Semiquantum secret sharing using
  three particles without entanglement  (2016)

\bibitem{lau2013quantum}
Lau, H.K., Weedbrook, C.: Quantum secret sharing with continuous-variable
  cluster states.
\newblock Physical Review A \textbf{88}(4), 042313 (2013)

\bibitem{Nielsen2002}
Lo, H.K., Spiller, T., Popescu, S.: Introduction to quantum computation and
  information.
\newblock World Scientific (1998)

\bibitem{lu2018verifiable}
Lu, C., Miao, F., Hou, J., Meng, K.: Verifiable threshold quantum secret
  sharing with sequential communication.
\newblock Quantum Information Processing \textbf{17}(11), 310 (2018)

\bibitem{luo2018novel}
Luo, Z.y., Shi, R.h., Xu, M., Zhang, S.: A novel quantum solution to
  privacy-preserving nearest neighbor query in location-based services.
\newblock International Journal of Theoretical Physics \textbf{57}(4),
  1049--1059 (2018)

\bibitem{mashhadi2012analysis}
Mashhadi, S.: Analysis of frame attack on hsu et al.’s non-repudiable
  threshold multi-proxy multi-signature scheme with shared verification.
\newblock Scientia Iranica \textbf{19}(3), 674--679 (2012)

\bibitem{mashhadi2012novel}
Mashhadi, S.: A novel secure self proxy signature scheme.
\newblock IJ Network Security \textbf{14}(1), 22--26 (2012)

\bibitem{mashhadi2013novel}
Mashhadi, S.: A novel non-repudiable threshold proxy signature scheme with
  known signers.
\newblock IJ Network Security \textbf{15}(4), 274--279 (2013)

\bibitem{mashhadi2015computationally}
Mashhadi, S.: Computationally secure multiple secret sharing: Models, schemes,
  and formal security analysis.
\newblock ISeCure \textbf{7}(2) (2015)

\bibitem{mashhadi2016analysis}
Mashhadi, S.: Analysis of warrant attacks on some threshold proxy signature
  schemes.
\newblock Journal of Information Processing Systems \textbf{12}(2), 249--262
  (2016)

\bibitem{mashhadi2016fairly}
Mashhadi, S.: How to fairly share multiple secrets stage by stage.
\newblock Wireless Personal Communications \textbf{90}(1), 93--107 (2016)

\bibitem{mashhadi2016share}
Mashhadi, S.: Share secrets stage by stage with homogeneous linear feedback
  shift register in the standard model.
\newblock Security and Communication Networks \textbf{9}(17), 4495--4504 (2016)

\bibitem{mashhadi2017new}
Mashhadi, S.: New multi-stage secret sharing in the standard model.
\newblock Information Processing Letters \textbf{127}, 43--48 (2017)

\bibitem{mashhadi2017secure}
Mashhadi, S.: Secure publicly verifiable and proactive secret sharing schemes
  with general access structure.
\newblock Information sciences \textbf{378}, 99--108 (2017)

\bibitem{Mashhadi2019}
Mashhadi, S.: General secret sharing based on quantum fourier transform.
\newblock Quantum Information Processing \textbf{18}(4), 114 (2019)

\bibitem{mashhadi2020csa}
Mashhadi, S.: A csa-secure multi-secret sharing scheme in the standard model.
\newblock Journal of Applied Security Research \textbf{15}(1), 84--95 (2020)

\bibitem{mashhadi2020improvement}
Mashhadi, S.: Improvement of a (t, n) threshold d- level quantum secret sharing
  scheme.
\newblock Journal of Applied Security Research pp. 1--12 (2020)

\bibitem{mashhadi2020toward}
Mashhadi, S.: Toward a formal proof for multi-secret sharing in the random
  oracle model.
\newblock Information Security Journal: A Global Perspective \textbf{29}(5),
  244--249 (2020)

\bibitem{mashhadi2015two}
Mashhadi, S., Dehkordi, M.H.: Two verifiable multi secret sharing schemes based
  on nonhomogeneous linear recursion and lfsr public-key cryptosystem.
\newblock Information Sciences \textbf{294}, 31--40 (2015)

\bibitem{mashhadi2017provably}
Mashhadi, S., Dehkordi, M.H., Kiamari, N.: Provably secure verifiable
  multi-stage secret sharing scheme based on monotone span program.
\newblock IET Information Security \textbf{11}(6), 326--331 (2017)

\bibitem{nielsen2002quantum}
Nielsen, M.A., Chuang, I.: Quantum computation and quantum information (2002)

\bibitem{peng2018novel}
Peng, Z.w., Shi, R.h., Wang, P.h., Zhang, S.: A novel quantum solution to
  secure two-party distance computation.
\newblock Quantum Information Processing \textbf{17}(6), 1--12 (2018)

\bibitem{Qin2018Multidimensional}
Qin, H., Tso, R., Dai, Y.: Multi-dimensional quantum state sharing based on
  quantum fourier transform.
\newblock Quantum Information Processing \textbf{17}(3), 48 (2018)

\bibitem{qin2015t}
Qin, H., Zhu, X., Dai, Y.: (t, n) threshold quantum secret sharing using the
  phase shift operation.
\newblock Quantum Information Processing \textbf{14}(8), 2997--3004 (2015)

\bibitem{run2010efficient}
Run-Hua, S., Liu-Sheng, H., Wei, Y., Hong, Z.: An efficient scheme for
  multiparty multi-particle state sharing.
\newblock Communications in Theoretical Physics \textbf{54}(1), 93 (2010)

\bibitem{run2011novel}
Run-Hua, S., Liu-Sheng, H., Wei, Y., Hong, Z.: A novel multiparty quantum
  secret sharing scheme of secure direct communication based on bell states and
  bell measurements.
\newblock Chinese Physics Letters \textbf{28}(5), 050303 (2011)

\bibitem{shamir1979share}
Shamir, A.: How to share a secret.
\newblock Communications of the ACM \textbf{22}(11), 612--613 (1979)

\bibitem{shi2010quantum}
Shi, R., Huang, L., Yang, W., Zhong, H.: Quantum secret sharing between
  multiparty and multiparty with bell states and bell measurements.
\newblock SCIENCE CHINA Physics, Mechanics and Astronomy \textbf{53}(12),
  2238--2244 (2010)

\bibitem{shi2016data}
Shi, R., Zhang, Y., Zhong, H., Cui, J., Zhang, S.: Data integrity checking
  protocol based on secure multiparty computation.
\newblock In: Wireless Communications, Networking and Applications, pp.
  873--882. Springer (2016)

\bibitem{shi2018efficient}
Shi, R.H.: Efficient quantum protocol for private set intersection cardinality.
\newblock IEEE Access \textbf{6}, 73102--73109 (2018)

\bibitem{shi2011efficient}
Shi, R.h., Huang, L.s., Yang, W., Zhong, H.: Efficient symmetric five-party
  quantum state sharing of an arbitrary m-qubit state.
\newblock International Journal of Theoretical Physics \textbf{50}(11),
  3329--3336 (2011)

\bibitem{shi2011multi}
Shi, R.h., Huang, L.s., Yang, W., Zhong, H.: Multi-party quantum state sharing
  of an arbitrary two-qubit state with bell states.
\newblock Quantum Information Processing \textbf{10}(2), 231--239 (2011)

\bibitem{shi2012novel}
Shi, R.H., Huang, L.S., Yang, W., Zhong, H.: Novel and effective secret sharing
  scheme.
\newblock Journal of China Institute of Communications \textbf{33}(1), 10--16
  (2012)

\bibitem{shi2016efficient}
Shi, R.h., Mu, Y., Zhong, H., Cui, J., Zhang, S.: An efficient quantum scheme
  for private set intersection.
\newblock Quantum Information Processing \textbf{15}(1), 363--371 (2016)

\bibitem{shi2016secure}
Shi, R.h., Mu, Y., Zhong, H., Cui, J., Zhang, S.: Secure multiparty quantum
  computation for summation and multiplication.
\newblock Scientific reports \textbf{6}(1), 1--9 (2016)

\bibitem{shi2015quantum}
Shi, R.h., Mu, Y., Zhong, H., Zhang, S.: Quantum oblivious set-member decision
  protocol.
\newblock Physical Review A \textbf{92}(2), 022309 (2015)

\bibitem{shi2016comment}
Shi, R.h., Mu, Y., Zhong, H., Zhang, S.: Comment on “secure quantum private
  information retrieval using phase-encoded queries”.
\newblock Physical Review A \textbf{94}(6), 066301 (2016)

\bibitem{shi2016quantum}
Shi, R.h., Mu, Y., Zhong, H., Zhang, S., Cui, J.: Quantum private set
  intersection cardinality and its application to anonymous authentication.
\newblock Information Sciences \textbf{370}, 147--158 (2016)

\bibitem{shi2017quantum}
Shi, R.H., Zhang, S.: Quantum solution to a class of two-party private
  summation problems.
\newblock Quantum Information Processing \textbf{16}(9), 1--9 (2017)

\bibitem{shi2011asymmetric}
Shi, R.H., Zhong, H.: Asymmetric multiparty-controlled teleportation of
  arbitrary n-qudit states using different quantum channels.
\newblock In: International Conference on Theoretical and Mathematical
  Foundations of Computer Science, pp. 337--344. Springer (2011)

\bibitem{shi2012multiparty}
Shi, R.h., Zhong, H.: Multiparty quantum secret sharing with the pure entangled
  two-photon states.
\newblock Quantum Information Processing \textbf{11}(1), 161--169 (2012)

\bibitem{shi2013multi}
Shi, R.H., Zhong, H.: Multi-party quantum key agreement with bell states and
  bell measurements.
\newblock Quantum information processing \textbf{12}(2), 921--932 (2013)

\bibitem{shi2015comments}
Shi, R.h., Zhong, H., Zhang, S.: Comments on two schemes of identity-based user
  authentication and key agreement for mobile client--server networks.
\newblock The Journal of Supercomputing \textbf{71}(11), 4015--4018 (2015)

\bibitem{song2017t}
Song, X.L., Liu, Y.B., Deng, H.Y., Xiao, Y.G.: (t, n) threshold d-level quantum
  secret sharing.
\newblock Scientific reports \textbf{7}(1), 6366 (2017)

\bibitem{sun2020toward}
Sun, Z., Song, L., Huang, Q., Yin, L., Long, G., Lu, J., Hanzo, L.: Toward
  practical quantum secure direct communication: A quantum-memory-free protocol
  and code design.
\newblock IEEE Transactions on Communications \textbf{68}(9), 5778--5792 (2020)

\bibitem{sutradhar2020efficient}
Sutradhar, K., Om, H.: Efficient quantum secret sharing without a trusted
  player.
\newblock Quantum Information Processing \textbf{19}(2), 1--15 (2020)

\bibitem{sutradhar2020generalized}
Sutradhar, K., Om, H.: A generalized quantum protocol for secure multiparty
  summation.
\newblock IEEE Transactions on Circuits and Systems II: Express Briefs
  \textbf{67}(12), 2978--2982 (2020)

\bibitem{sutradhar2020hybrid}
Sutradhar, K., Om, H.: Hybrid quantum protocols for secure multiparty summation
  and multiplication.
\newblock Scientific Reports \textbf{10}(1), 1--9 (2020)

\bibitem{sutradhar2021efficient}
Sutradhar, K., Om, H.: An efficient simulation for quantum secure multiparty
  computation.
\newblock Scientific Reports \textbf{11}(1), 1--9 (2021)

\bibitem{ting2009participant}
Ting-Ting, S., Jie, Z., Fei, G., Qiao-Yan, W., Fu-Chen, Z.: Participant attack
  on quantum secret sharing based on entanglement swapping.
\newblock Chinese Physics B \textbf{18}(4), 1333 (2009)

\bibitem{wang2013cryptanalysis}
Wang, T.Y., Li, Y.P.: Cryptanalysis of dynamic quantum secret sharing.
\newblock Quantum information processing \textbf{12}(5), 1991--1997 (2013)

\bibitem{wang2017security}
Wang, T.Y., Liu, Y.Z., Wei, C.Y., Cai, X.Q., Ma, J.F.: Security of a kind of
  quantum secret sharing with entangled states.
\newblock Scientific reports \textbf{7}(1), 2485 (2017)

\bibitem{wang2011security}
Wang, T.Y., Wen, Q.Y.: Security of a kind of quantum secret sharing with single
  photons.
\newblock Quantum Information \& Computation \textbf{11}(5), 434--443 (2011)

\bibitem{Wang2008}
Wang, T.y., Wen, Q.y., Gao, F., Lin, S., Zhu, F.c.: Cryptanalysis and
  improvement of multiparty quantum secret sharing schemes.
\newblock Physics Letters A \textbf{373}(1), 65--68 (2008)

\bibitem{Gang2}
Xu, G., Chen, X.B., Dou, Z., Li, J., Liu, X., Li, Z.: Novel criteria for
  deterministic remote state preparation via the entangled six-qubit state.
\newblock Entropy \textbf{18}(7), 267 (2016)

\bibitem{Gang1}
Xu, G., Xiao, K., Li, Z., Niu, X.X., Ryan, M.: Controlled secure direct
  communication protocol via the three-qubit partially entangled set of states.
\newblock Comput. Mater. Continua \textbf{58}(3), 809--827 (2019)

\bibitem{xu2017nearest}
Xu, M., Shi, R.h., Luo, Z.y., Peng, Z.w.: Nearest private query based on
  quantum oblivious key distribution.
\newblock Quantum Information Processing \textbf{16}(12), 1--12 (2017)

\bibitem{yang2013secret}
Yang, W., Huang, L., Shi, R., He, L.: Secret sharing based on quantum fourier
  transform.
\newblock Quantum information processing \textbf{12}(7), 2465--2474 (2013)

\bibitem{zhang2018economic}
Zhang, R., Shi, R.h., Qin, J.q., Peng, Z.w.: An economic and feasible quantum
  sealed-bid auction protocol.
\newblock Quantum Information Processing \textbf{17}(2), 1--14 (2018)

\end{thebibliography}

\end{document}